\documentclass[twoside]{article}

\usepackage{aistats2025}
\usepackage{breakcites}
\usepackage{graphicx}
\newtheorem{assumption}{Assumption}
\usepackage{amsmath,amssymb,amsfonts,amsthm}

\usepackage{algorithm}
\usepackage{xcolor}
\usepackage{algpseudocode}
\usepackage{natbib}
\usepackage{bm}
\newcommand\E{\mathbb{E}}
\newcommand{\argmax}{\operatornamewithlimits{argmax}}
\usepackage{bbm}
\usepackage{thm-restate}

\newtheorem{definition}{Definition}

\begin{document}

\twocolumn[

\aistatstitle{Data-driven Design of Randomized Control Trials with Guaranteed Treatment Effects}

\aistatsauthor{ Author 1 \And Author 2 \And  Author 3 }

\aistatsaddress{ Institution 1 \And  Institution 2 \And Institution 3 } ]

\begin{abstract}
    Randomized controlled trials (RCTs) can be used to generate guarantees on treatment effects. 
However, RCTs often spend unnecessary resources exploring sub-optimal treatments, which can reduce the power of treatment guarantees. To address these concerns, we develop a two-stage RCT  where, first on a data-driven screening stage, we prune low-impact treatments, while in the second stage, we develop high probability lower bounds on the treatment effect. 
Unlike existing adaptive RCT frameworks, our method is simple enough to be implemented in scenarios with limited adaptivity.
We derive optimal designs for two-stage RCTs and demonstrate how we can implement such designs through sample splitting.
Empirically, we demonstrate that two-stage designs improve upon single-stage approaches, especially in scenarios where domain knowledge is available in the form of a prior. Our work is thus, a simple, yet effective, method to estimate high probablility certificates for high performant treatment effects on a RCT. 

\end{abstract}

\section{Introduction}

Randomized controlled trials (RCTs) are the gold standard for measuring treatment effects ~\citep{rct}. In a traditional single-stage RCT, the experimenter fixes a set of treatments up front and randomizes the samples across them using predetermined assignment probabilities. While this design simplifies implementation and analysis, it often spends samples exploring suboptimal arms. Since underperforming arms are unlikely to be implemented, precisely quantifying their effect sizes at the cost of reducing the statistical power for higher-quality treatments is not worthwhile. Policy decisions on scaling or adopting programs often depend on proving sufficient effectiveness, yet single-stage trials cannot allocate more samples to high-performing arms to improve the estimate precision.

Motivated by these limitations, there has been a growing body of work on adaptive trials, which periodically update the probability of assigning units to each arm in order to focus on more promising arms. While adaptive designs can enhance statistical performance, their implementation in practice can be challenging. Treatment assignment probabilities must be updated dynamically, potentially after each sample, which imposes significant logistical demands on practitioners conducting the trial. Furthermore, high degrees of adaptivity may even be infeasible when outcomes are delayed. For instance, in clinical trials for chronic disease treatments, effects may take months or even years to observe, making it impossible to repeatedly change assignment probabilities based on realized outcomes. 

In this work, we question whether complex, adaptive designs are needed to capture most of the value of improved statistical performance. We develop two-stage designs which operate within a deliberately restricted framework. The experimenter uniformly randomizes over all arms in a first stage, uses the results to pick a subset of arms to retain for the second stage of the trial, and then uniformly randomizes once more over the arms that are retained. That is, the algorithm outputs both its guess for a high performing arm \textit{and} a statistically valid lower bound for the arm's mean outcome. The goal is to return as large a lower bound as possible, which requires both identifying an arm with high reward and concentrating enough samples on that arm to quantify (\textit{certify}) its outcome accurately. Notably, the only decision is which arms to keep in the second stage, thus aiming to strike a balance between simplicity and performance. 

We make three main contributions. First, we design a novel algorithm to approximate the optimal two-stage design, with the goal of identifying a high probability lower bound. We theoretically analyze this algorithm's performance and prove an approximation guarantee relative to the optimal policy. Second, we extend our formulation to the Bayesian setting where the experimenter has a prior over arm means. We also present an algorithms and accompanying approximation guarantees for this setting. Third, we empirically demonstrate (code available at [hidden]) that our two-stage algorithms outperform single-stage designs on both synthetic and real-world datasets, without the logistical burden of complex adaptive designs.

\subsection{Related work}

Naturally, our work is related to best-arm identification \citep{jamieson2014best, auer2002using} and top-K arm identification \citep{bubeck2013multiple}. 
However, our approach differs from these settings in two key ways: first, our initial screening for good arms may not include all of the best ones for a given size, and it may not even contain the best arm. 
This is because our method is not designed for exhaustive exploration if it sacrifices certifying the final estimated effect. An interesting comparison can be made though with finding the $\epsilon$ best arms \citep{mason2020finding} setup. These formulations output all arms within an $\epsilon$ distance of the best one. However, once $\epsilon$ is fixed, the formulation remains exhaustive, as it aims to identify all good arms determined by the hyperparameter. In contrast, our method is not designed to optimize for exhaustive exploration in this way. Perhaps closest to our work  is \citet{katz2020true} where they study the sample complexity inherent of $\epsilon$ best arms and of finding a fixed number of arms larger than a threshold. The second problem is somewhat related to our setting as they don't care about finding all good arms above the given threshold, but instead just a subset of such arms.  However, they still fix an absolute hyperparameter, which is the number of good arms they aim to identify (in addition to a threshold defining the good arms). In contrast, our work adaptively chooses the number of arms to generate the largest certificate. 

A current topic in adaptive trials literature is the study of the two usually desired, yet conflicting goals, for adaptive designs. One is to optimize for the cumulative regret, i.e., the benefit to the participants {\em during} the experiment. The other is to optimize for the information gained via the experiment for selection and deployment of the best arm {\em after} the experiment. The latter can be quantified via, e.g., the simple regret, of the precision with which treatment effects can be estimated. Previous work \citep{bubeck2011pure} shows that these goals are irreconcilable; designs with lower cumulative regret trade off nearly one-for-one in the worst case with information gain objectives . Our focus in this paper is on designs which identify a high-performing arm with the strongest statistical power possible for future deployment (i.e., our aim is not to minimize cumulative regret). \citep{athey2022contextual,li2010contextual,bastani2021efficient,kasy2021adaptive,deshmukh2018simple,chambaz2017targeted, simchi2023multi, fan2021fragility}.

Finally, identifying estimands with statistically valid properties after an adaptive trial, such as our \textit{certification}, remains an active area of research. Due to the correlations introduced by adaptive procedures, performing inference on estimands obtained after the trial is concluded, without having a follow-up independent data collection for the identified arms, is not straightforward. Solutions to this problem have emerged from safe anytime inference \citep{waudby2021time} or by imposing additional conditions, as in batch-only inference \citep{chen2023optimal}. Our goal, instead, is to rely on the standard uniform allocation RCT, that is widely accepted by practitioners, in each stage to generate a certificate.
\section{Introducing Two-Stage RCTs}

We introduce the problem of finding a good certificate for a randomized control trial (RCT). 

\subsection{Problem Formulation}

Many real-world scenarios are more concerned in producing a high probability lower bound on the impact of a treatment rather than guaranteeing if it is optimal or not. For example, in policy settings, practitioners aim to give a guarantee on the performance of a policy (e.g. the policy has a certain positive treatment effect) in order to justify a course of action dictated by it. Formally, consider a set of $n$ arms, with means $\bm{\mu} = (\mu_{1},\mu_{2},\ldots,\mu_{n})$ and distributions $D_{\mu_i}$. Here, each $i$ corresponds to a treatment whose effect we determine through an RCT. Our objective is to produce a high-probability lower bound $l$ for one arm $\mu_i$, aiming to maximize $l$ to ensure a \textit{certified} high effect for arm $i$.  We define a certificate as follows: 

\begin{definition}
\textbf{Certificate} - Let $l$ be an estimand such that $l \leq \mu_i$ with probability $1 - \delta$ for some $i \in [n]$. We will say that $l$ is a certificate for $\mu_i$.
\end{definition}

Naturally, we can compute certificates in a single-stage RCT by uniformly allocating data to each arm. Nevertheless,  as we mentioned in the introduction, this approach wastes samples on unpromising arms, thereby increasing the variance on the estimand $l$. On the other extreme, fully adaptive trials are less than ideal as RCTs are typically run in-batch and with fixed budgets across many individuals, making them logistically cumbersome or, in the worst case, infeasible to deploy. Therefore, we propose a deliberately simplified two-stage RCT that captures the benefits of adaptivity without introducing unnecessary complexity. In our approach, the first stage filters out suboptimal arms using a policy $\pi$, while the second stage computes the certificate $l$ by allocating more interventions to the high-performing arms, resulting in a more precise estimate of the certificate $l$.

More precisely, let $T$ be our total budget, $s_{1}$ be our budget in the first stage and $s_{2}$ be our budget in the second stage, with $s_{1} + s_{2} = T$.  Let $X_1,X_2,\ldots,X_{s_{1}}$ be the data set for a first stage and $Y_1,Y_2,\ldots,Y_{s_2}$ the data for the second stage. Let $k$ be the number of arms selected for the second stage. For both the first and second stage, we uniformly explore all arms, e.g. in the first stage, we explore each arm $\lfloor \frac{s_{1}}{n} \rfloor$ times.

We determine the arms which survive from the first to the second stage through a function, $\pi(X_1,X_2,\ldots,X_{s_{1}})$.
This function maps first-stage results to a set of surviving arms, $\pi(X_1,X_2,\ldots,X_{s_{1}}) \subseteq [n]$. To avoid overloading notation, let $\mathbf{X} = \pi(X_1,X_2,\ldots,X_{s_{1}})$ and $\mathbf{Y} = Y_1,Y_2,\ldots,Y_{s_2}$. 
Our objective is then 
\begin{equation}
    \small
    f_{\bm{\mu}}(\pi) = \E_{\mathbf{X},\mathbf{Y} \sim \bm{\mu}}\left[\max_{i \in \pi(\mathbf{X})}l_i(\mathbf{Y})\right]
    \label{objective}
\end{equation}
The randomness behind choosing $\pi(\mathbf{X})$ is driven by the first-stage data, $\mathbf{X}$, while the randomness from the bound $l_i(\mathbf{Y})$, is from the second-stage, $\mathbf{Y}$. 

From our proposed objective, is clear that we focus on neither finding the single best arm nor any subset of the $k$-best arms, but rather on obtaining the best possible policy $\pi$ that, on average, will lead to the highest certificate. 
Intuitively, $\pi(\mathbf{X})$ is obtained by quickly screening the arms on the first stage discarding less promising arms using minimal data. This approach frees up more budget to be allocated to a subset of arms that show genuine potential, thereby improving the quality of the final certificate.






\subsection{Certificate estimation} 

Naturally objective \ref{objective} will be a function of the way the certificate $l$ is estimated. With the aim of outlining a concrete proposal to compute $l$, we begin by conditioning on a draw of the first stage. In other words, fix the $\pi(\mathbf{X})$, which would be a set of size $k$. Let $\overline{Y}_i$ denote the empirical mean from the second stage for arm $i$.
Assuming our the rewards are bounded, we show that $\overline{Y}_i$ concentrates near $\mu_{i}$ (using Hoeffding's inequality): 
\begin{equation}
    \label{certificate}
    \small
    P\left(|\overline{Y}_i - \mu_i| < \sqrt{\frac{k}{2 s_2}log(\frac{2}{\delta})}\right) \geq 1 - \delta
\end{equation}
Our certficiate is therefore $l = \overline{Y}_i -\sqrt{\frac{k}{2 s_2}log(\frac{2}{\delta})}$. 

We note that there are two competing terms when trying to maximize $l$ in Equation~\ref{certificate}. Intuitively, a certificate is maximized when a large set of arms is pruned, as this minimizes $k$, while still leaving sufficient budget for accurately computing $l$ for each arm. 

More formally, analyzing the latter term of the certificate, it can be seen that a two-stage approach provides an advantage over a single-stage  approach if: 
\begin{equation*}
    \small
    \sqrt{\frac{1}{2|s_2/k|} \log\left(\frac{2}{\delta}\right)} < \sqrt{\frac{1}{2|T/n|} \log\left(\frac{2}{\delta}\right)}    
\end{equation*}

Such a result occurs because single-stage methods can use all $T/n$ samples per arm due to its lack of adaptivity, while we only use $s_{2}/k$. 
For such a comparison to hold, we need that to not discard the best arm. 
Therefore, we select $k$ to ensure that Equation~\ref{certificate} holds, while still capturing an arm with large $\mu_{i}$. 

\section{Finding optimal policies for Two-Stage RCTs}


The key challenge in constructing two-stage designs is finding the  $\pi$ which maximizes Equation~\ref{objective}. This is the case due to the combinatorial nature of the problem. Therefore,  we will start by focusing on a simplified class of policies for clarity and ease of analysis, the so-called \textit{top}-$k$ policies. We will then show that, under a natural condition on the reward distributions, the class of top-$k$ policies contains the optimal policy.

\subsection{Designing Top-K Policies}
We begin by defining top-k policies. 
\begin{definition}
    \textbf{Top-K Policy} - Let $\sigma(\mathbf{X})$ be the descending ordering of arms by empirical mean from the first stage; that is $\bar{X}_{\sigma(\mathbf{X})_{1}} \geq \bar{X}_{\sigma(\mathbf{X})_{2}},\ldots,\bar{X}_{\sigma(\mathbf{X})_{n}}$, where ties are broken randomly. 
A top-k policy outputs sets of the form $\pi(\mathbf{X}) = \{\sigma(\mathbf{X})_1, \sigma(\mathbf{X})_2...\sigma(\mathbf{X})_{k(\mathbf{X})}\}$ for some function $k(\mathbf{X})$. 
\end{definition}

We analyze the performance of top-k policies under a stochastic dominance condition: 
\begin{definition}
    \textbf{First-order stochastic dominance} A random variable $A$ is said to first order stochastically dominate a random variable $B$ if  $P(B \geq x) \leq P(A \geq x)$ for all $x$.
\end{definition}
\begin{assumption}
\label{assumption_1}
For any $i,j$ such that $\mu_i \geq \mu_j$, $D_{\mu_1}$ first-order stochastically dominates $D_{\mu_j}$
\end{assumption}
Such an assumption is necessary to allow for comparison between the distribution of rewards for arms based on the empirical mean; without such an assumption comparing arms using only the empirical means would be difficult. 
This formulation encompasses a wide variety of set-ups. For example, this assumption holds if all treatment arms have binary outcomes, a common form of outcomes in practice, or if outcomes are normally distributed with the same variance. 

We aim to show that when outcome distributions respect stochastic domination, the optimal policy for Equation~\ref{objective} can be written as a top-k policy. 
The first step in doing is to demonstrate that stochastic domination in the true also applies to domination in empirical mean. 
\begin{restatable}{lemma2}{stochasticdomination}
\label{thm:stochasticdomination}
    Let $\sigma$ be the descending ordering of arms by empirical mean observed on the first stage. Then, for any $i < j$, $D_{\mu_{\sigma_i(\mathbf{X})}}$ first-order stochastically dominates, $D_{\mu_{\sigma_i(\mathbf{X})}}$ 

\end{restatable}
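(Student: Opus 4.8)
The plan is to transfer the stochastic ordering from the true reward distributions to the \emph{empirical}-mean ordering produced by the first stage, and then argue that this ordering survives sorting: the arm landing in an earlier position of $\sigma(\mathbf{X})$ must, on average, be a genuinely higher-mean arm. Throughout let $\bar{X}_c$ denote the first-stage empirical mean of arm $c$ (each arm drawn $m=\lfloor s_1/n\rfloor$ times), write $C_i=\sigma_i(\mathbf{X})$ for the random arm in position $i$, let $R_i$ be a fresh second-stage draw from that arm, and set $\bar G_c(x)=P_{Z\sim D_{\mu_c}}(Z\ge x)$. Relabel arms so $\mu_1\ge\cdots\ge\mu_n$; by Assumption~\ref{assumption_1} this gives $\bar G_1(x)\ge\cdots\ge\bar G_n(x)$ for every fixed $x$. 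As a routine first step, for any pair with $\mu_a\ge\mu_b$ I couple their $m$ draws slot-by-slot through a common uniform and the inverse CDFs, so each coupled sample of $a$ dominates that of $b$ and hence $\bar X_a\ge\bar X_b$ almost surely; thus $\bar X_a$ first-order stochastically dominates $\bar X_b$.

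Next I reduce the claim to a statement about which arm occupies a position. Since the second stage is independent of the first, $P(R_i\ge x)=\sum_c P(C_i=c)\,\bar G_c(x)$, and likewise for $j$. Subtracting and applying Abel summation against the nonincreasing sequence $\bar G_1(x)\ge\cdots\ge\bar G_n(x)\ge 0$, the desired $P(R_i\ge x)\ge P(R_j\ge x)$ follows once I establish the partial-sum inequalities
\[
 P(C_i\in[m])\ \ge\ P(C_j\in[m])\qquad\text{for every }m\text{ and every }i<j,
\]
where $[m]$ is the set of the $m$ truly-best arms. Equivalently, the occupant of an earlier position stochastically dominates (in the true-mean order) the occupant of a later one; by transitivity it suffices to treat adjacent positions, i.e.\ to show $P(C_i\in A)\ge P(C_{i+1}\in A)$ for every up-set $A$.

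The adjacent comparison is the crux. Writing $P(C_i\in A)-P(C_{i+1}\in A)=\sum_{a\in A,\,b\notin A}\big[P(C_i=a,C_{i+1}=b)-P(C_i=b,C_{i+1}=a)\big]$, I would compare, for each mixed pair, the two ways it can fill the adjacent slots. Conditioning on the data of all other arms pins down an interval $I$ (the gap between the $(i-1)$-th and $i$-th largest remaining empirical means) into which both $\bar X_a,\bar X_b$ must fall, and reduces each per-pair term to $P(\bar X_a>\bar X_b,\ \bar X_a,\bar X_b\in I)-P(\bar X_b>\bar X_a,\ \bar X_a,\bar X_b\in I)$. The natural move is a transposition coupling swapping the two arms' data, but this swap is not measure preserving (the arms have different laws), and the surviving density ratio is exactly a monotone-likelihood-ratio quantity. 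I expect this to be the main obstacle: the conditional, interval-restricted inequality is \emph{not} implied by first-order dominance of a single pair in isolation, so the argument must exploit the \emph{global} dominance structure across all arms — the very arms that create $I$ are themselves ordered, which constrains where $I$ can sit — rather than proceeding pair-by-pair and interval-by-interval. A clean alternative that covers every motivating case in the paper is to note that Bernoulli and common-variance Gaussian families are ordered by monotone likelihood ratio, a property their empirical means inherit; under that stronger ordering the integrand $f_a(s)f_b(t)-f_a(t)f_b(s)$ is nonnegative for $s\ge t$, so the swap inequality holds on every interval and this step closes immediately.

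Finally, chaining the adjacent comparisons yields the partial-sum inequalities of the second step for all $i<j$, and feeding these back through the Abel bound gives $P(R_i\ge x)\ge P(R_j\ge x)$ for every $x$ — precisely the claimed first-order stochastic dominance of $D_{\mu_{\sigma_i(\mathbf{X})}}$ over $D_{\mu_{\sigma_j(\mathbf{X})}}$.
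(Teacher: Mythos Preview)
Your route and the paper's differ structurally. The paper does not use Abel summation or an adjacent-rank reduction; instead it fixes $t$, conditions in one stroke on $\{\mu_{\sigma_k(\mathbf X)}:k\ne i,j\}$ (equivalently, on which arms occupy every position except $i$ and $j$), and observes that this leaves exactly two arms $a,b$ with $\mu_a\ge\mu_b$ to fill positions $i$ and $j$. On the only nontrivial case $\mu_a\ge t>\mu_b$, the inner conditional expectation equals $\Pr(\bar X_a>\bar X_b\mid\cdot)-\Pr(\bar X_b>\bar X_a\mid\cdot)$, which the paper declares nonnegative ``by stochastic dominance combined with independence.'' This direct two-slot conditioning is slicker than your Abel/up-set scaffolding and handles non-adjacent $i,j$ without a transitivity step.

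That said, both arguments bottom out in the same conditioned pairwise comparison, and the obstacle you isolate is precisely the step the paper slides past. The conditioning event is invariant under swapping the values of $\bar X_a$ and $\bar X_b$, so (after further conditioning on the remaining arms' samples) the needed inequality is $P(\bar X_a\in I_1)\,P(\bar X_b\in I_2)\ge P(\bar X_a\in I_2)\,P(\bar X_b\in I_1)$ whenever $I_1$ lies above $I_2$ --- a likelihood-ratio (TP$_2$) property strictly stronger than first-order dominance. For instance, $\bar X_b$ uniform on $\{1,2,3\}$ and $\bar X_a$ with masses $(0.3,\,0.05,\,0.65)$ satisfies FOSD but violates this for $I_1=\{2\}$, $I_2=\{1\}$. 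Your diagnosis is therefore sharper than the paper's: under Assumption~\ref{assumption_1} alone neither proof closes the pairwise step, whereas under MLR --- which Bernoulli and equal-variance Gaussian families enjoy and which is preserved under convolution, hence by empirical means --- both go through. You could streamline your write-up by borrowing the paper's all-other-positions conditioning in place of the Abel/adjacency machinery, while retaining your explicit MLR hypothesis for the crux.
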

This allows us to translate any policy into a top-k counterpart which achieves higher value for $f(\pi)$. 
\begin{restatable}{lemma2}{greatertopk}
\label{thm:top_k_dominance}
    Fix an arbitrary policy $\pi$. Define it's top-k counterpart $\pi'$ to be the top-k policy with $k(\mathbf{X}) = |\pi(\mathbf{X})|$. That is, it selects the same number of arms as $\pi$ does for every realization of the trial, but it selects those arms to be the ones with largest empirical mean. Then $f_{\bm{\mu}}(\pi') \geq f_{\bm{\mu}}(\pi)$.    
\end{restatable}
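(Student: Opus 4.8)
The plan is to first cancel everything that $\pi$ and its top-$k$ counterpart $\pi'$ share, and then compare the remainder through a sequence of single-arm swaps. By construction $|\pi'(\mathbf{X})| = |\pi(\mathbf{X})| =: k(\mathbf{X})$ for every first-stage realization, so both the per-arm second-stage budget $\lfloor s_2/k(\mathbf{X})\rfloor$ and the Hoeffding penalty $\sqrt{\tfrac{k(\mathbf{X})}{2s_2}\log(2/\delta)}$ from Equation~\ref{certificate} agree realization-by-realization. Since $l_i(\mathbf{Y}) = \overline{Y}_i - \sqrt{\tfrac{k}{2s_2}\log(2/\delta)}$, I would pull the common penalty outside the maximum; then $f_{\bm{\mu}}(\pi') - f_{\bm{\mu}}(\pi) = \E[\max_{i\in\pi'(\mathbf{X})}\overline{Y}_i] - \E[\max_{i\in\pi(\mathbf{X})}\overline{Y}_i]$, so it suffices to show this difference is nonnegative, where $\overline{Y}_i$ denotes the second-stage empirical mean of arm $i$.

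Next I would run an exchange argument on first-stage ranks. Fixing $\mathbf{X}$, the set $\pi'(\mathbf{X})$ is exactly the ranks $\{1,\dots,k\}$ under $\sigma(\mathbf{X})$, while $\pi(\mathbf{X})$ is some other size-$k$ set of ranks; whenever they differ there is a retained rank $r^\circ>k$ in $\pi(\mathbf{X})$ and an omitted top rank $r^\ast\le k$, so $r^\ast<r^\circ$. Replacing $r^\circ$ by $r^\ast$ while holding the other $k-1$ retained ranks fixed, and iterating at most $k$ times, transforms $\pi(\mathbf{X})$ into $\pi'(\mathbf{X})$. Letting $M$ be the maximum second-stage mean over the held-fixed block, one swap requires $\E[\max(M,\overline{Y}_{\sigma_{r^\ast}})] \ge \E[\max(M,\overline{Y}_{\sigma_{r^\circ}})]$. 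Via the layer-cake identity together with $\mathbf{1}(\max(x,m)\ge t) = 1 - \mathbf{1}(x<t)\,\mathbf{1}(m<t)$, this reduces to the pointwise-in-$t$ claim $P(\overline{Y}_{\sigma_{r^\ast}}<t,\,M<t) \le P(\overline{Y}_{\sigma_{r^\circ}}<t,\,M<t)$.

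The main obstacle, and the reason Lemma~\ref{thm:stochasticdomination} is needed rather than a one-line dominance argument, is that $M$, $\overline{Y}_{\sigma_{r^\ast}}$ and $\overline{Y}_{\sigma_{r^\circ}}$ are all coupled through the shared ordering $\sigma(\mathbf{X})$. Conditioning on $\mathbf{X}$ alone will not do: it makes the three second-stage means independent, but the empirical order need not match the true order, so for a fixed $\mathbf{X}$ the rank-$r^\ast$ mean need not dominate the rank-$r^\circ$ mean. Instead I would integrate over $\mathbf{X}$ and argue at the level of rank-indexed distributions. Conditioning on the permutation $\sigma(\mathbf{X})$ and writing $G_a(t) := P(\overline{Y}_a<t)$ --- which is nonincreasing in $\mu_a$, since Assumption~\ref{assumption_1} gives first-order dominance of $D_{\mu_a}$ and this dominance is inherited by the mean of independent draws --- the target becomes $\E\big[G_{\sigma_{r^\ast}}(t)\textstyle\prod_{q} G_{\sigma_q}(t)\big] \le \E\big[G_{\sigma_{r^\circ}}(t)\prod_{q} G_{\sigma_q}(t)\big]$, the product running over the held-fixed ranks $q$.

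This is precisely where Lemma~\ref{thm:stochasticdomination} enters: it states that the rank-$r^\ast$ reward distribution stochastically dominates the rank-$r^\circ$ one whenever $r^\ast<r^\circ$, which controls the single-factor comparison, and because $\max(\cdot,m)$ is nondecreasing the dominance survives once the higher-ranked arm is combined with the retained block. I expect the residual work to be verifying that the empirical-mean ranking preserves this pairwise comparison jointly with the retained factors --- i.e.\ that the shared-ranking correlation does not reverse the inequality --- after which summing the per-swap inequalities and taking the expectation over $\mathbf{X}$ yields $f_{\bm{\mu}}(\pi')\ge f_{\bm{\mu}}(\pi)$.
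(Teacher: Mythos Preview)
Your proposal is correct in spirit and considerably more careful than the paper's own proof, which is essentially a single sentence: since $\max$ is non-decreasing in its arguments, first-order stochastic dominance (Lemma~\ref{thm:stochasticdomination}) gives $\E[\max_{i\in\pi'(\mathbf{X})}\mu_i]\ge\E[\max_{i\in\pi(\mathbf{X})}\mu_i]$, hence $f_{\bm{\mu}}(\pi')\ge f_{\bm{\mu}}(\pi)$. The paper's surrounding discussion does gesture at the same swap idea you formalize (``we can only do better by swapping in the arm with higher empirical mean''), but the written proof invokes only the marginal characterization of stochastic dominance and never addresses the joint dependence you worry about.

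So the genuine difference is one of rigor, not of route. You use the same two ingredients --- Lemma~\ref{thm:stochasticdomination} and monotonicity of $\max$ --- but you make explicit the exchange structure and, more importantly, you flag that marginal dominance of $\mu_{\sigma_{r^\ast}}$ over $\mu_{\sigma_{r^\circ}}$ does not automatically yield $\E[G_{\sigma_{r^\ast}}(t)\prod_q G_{\sigma_q}(t)]\le\E[G_{\sigma_{r^\circ}}(t)\prod_q G_{\sigma_q}(t)]$ because the held-fixed ranks are coupled to both through the common ordering $\sigma(\mathbf{X})$. The paper's one-liner simply does not engage with this point.

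The ``residual work'' you leave open is exactly the step the paper's proof of Lemma~\ref{thm:stochasticdomination} is built to supply: condition on $\mu_{\sigma_q(\mathbf{X})}$ for all $q\neq r^\ast,r^\circ$, so the product $\prod_q G_{\sigma_q}(t)$ becomes a fixed nonnegative constant, and then the two-arm coupling argument from that lemma (comparing which of the two remaining means lands at the higher rank) gives the conditional inequality pointwise. Averaging over the conditioning closes your exchange step. In short, your plan is sound and strictly more complete than what appears in the paper; the missing piece is precisely the coupling already developed for Lemma~\ref{thm:stochasticdomination}.
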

The main idea is as follows: consider any set of arms $\pi(\mathbf{X})$ selected by the original policy. If there is an arm outside of $\pi(\mathbf{X})$ with strictly higher empirical mean than contained within $\pi(\mathbf{X})$, we can only do better by swapping in the arm with higher empirical mean. Formally, this follows because stochastic dominance also implies a corresponding ordering on any monotone function of a random variable, including the $\max$ which appears in our objective function. 

Together, these results imply that the existance of an optimal top-k policy:
\begin{restatable}{theorem}{topk}
\label{thm:topk}
  Let $\pi^*$ the policy that maximizes $f_{\bm{\mu}}$. There exists a top-k policy $\pi$ such that $f_{\bm{\mu}}(\pi^*)= f_{\bm{\mu}}(\pi)$. 
\end{restatable}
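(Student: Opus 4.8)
The plan is to obtain the theorem as an almost immediate corollary of Lemma~\ref{thm:top_k_dominance}, using the optimality of $\pi^*$ to upgrade the inequality there into an equality. The key observation is that Lemma~\ref{thm:top_k_dominance} already does all the analytic work: it takes an \emph{arbitrary} policy and produces a top-k policy that is at least as good. So the only remaining task is to feed the optimal policy into that construction and argue that optimality forces the two values to coincide.

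Concretely, first I would invoke Lemma~\ref{thm:top_k_dominance} with $\pi = \pi^*$. This yields the top-k counterpart $\pi'$, defined by $k(\mathbf{X}) = |\pi^*(\mathbf{X})|$; that is, for every realization of the first stage $\pi'$ retains the same number of arms as $\pi^*$ but replaces them with the arms of largest empirical mean. The lemma guarantees $f_{\bm{\mu}}(\pi') \geq f_{\bm{\mu}}(\pi^*)$. Since $\pi'$ is itself a valid policy and $\pi^*$ is by hypothesis a maximizer of $f_{\bm{\mu}}$, we also have the reverse inequality $f_{\bm{\mu}}(\pi^*) \geq f_{\bm{\mu}}(\pi')$. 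Chaining the two gives $f_{\bm{\mu}}(\pi') = f_{\bm{\mu}}(\pi^*)$, and since $\pi'$ is a top-k policy by construction, it is exactly the witness required by the statement.

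The points that require attention are bookkeeping rather than genuine obstacles. First, one should confirm that $\pi'$ is well-defined: this needs the descending order $\sigma(\mathbf{X})$ to be unambiguous, which the random tie-breaking in the definition of $\sigma$ supplies, so $\pi'$ selects a well-defined (random) set of size $|\pi^*(\mathbf{X})|$. Second, the statement presupposes that a maximizer $\pi^*$ exists; if I wanted to avoid assuming this, I would instead phrase the argument with a supremum, noting that Lemma~\ref{thm:top_k_dominance} produces a weakly better top-k policy for \emph{every} $\pi$, so the supremum of $f_{\bm{\mu}}$ over all policies equals its supremum over top-k policies, with the stated existence following once attainment is established (e.g., because the effective decision reduces to choosing, for each realization, a cutoff $k(\mathbf{X}) \in \{0,1,\dots,n\}$). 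I do not expect a substantive obstacle here, since Lemma~\ref{thm:top_k_dominance}—which in turn rests on the empirical-mean stochastic dominance of Lemma~\ref{thm:stochasticdomination}—already carries the entire analytic content of the result.
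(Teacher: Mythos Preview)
Your proposal is correct and follows exactly the same approach as the paper: apply Lemma~\ref{thm:top_k_dominance} to the optimal policy $\pi^*$ to obtain a top-$k$ counterpart $\pi'$ with $f_{\bm{\mu}}(\pi') \geq f_{\bm{\mu}}(\pi^*)$, then use optimality of $\pi^*$ to force equality. The paper's proof is essentially a one-line version of your argument, omitting the bookkeeping remarks you added about tie-breaking and existence of the maximizer.
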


\subsection{Sample Splitting}
While Theorem~\ref{thm:topk} demonstrates that the optimal policy is top-k, the optimal $k$, which we denote $k^{*}$, is unknown.
We propose a sample splitting algorithm to estimate $k^{*}$. 
We construct our sample splitting algorithm by splitting the data from the first stage into two halves: training ($U$) and  validation ($V$). 
Our training half is used to compute empirical means and sort arms. 
Our validation half is used to compute certificate values for each value of $k$; essentially, we use the validation set $V$ to simulate the second-stage. 
We can then estimate the certificate value for different choices of $k$, and is analogous to train-validation splits for hyperparameter selection. 
We provide pseudocode below:
\begin{algorithm}
\caption{Sample splitting algorithm}
\begin{algorithmic}[1]
\State \textbf{Input:} $s_{1}$ iid samples.
\State \textbf{Output:} Set $\pi(\mathbf{X})$
\State Split first stage data randomly into two sets: $ U = \{x_1,...,x_{\frac{s_1}{2}}\}$ and $V=\{z_1,...,z_{\frac{s_1}{2}}\}$.
\State Compute $\bar{U}$, which is the average per-arm using data from $U$
\State Let $\sigma$ be the ordering of arms according to $\bar{U}$
\State Compute $\bar{V}$, which is the average per-arm using data from $V$
\For{$i = 1$ to k} 
    \State set $l_i = \argmax_{j \in [i]} \bar{V}_{\sigma_{j}} - \sqrt{\frac{i}{2s_2}log(\frac{2}{\delta})}$
\EndFor
\State $k = \argmax_{i \in [n]} l_i$
\State Let $\sigma'(\mathbf{X})$ be ordering of arms according $U \cup V$
\State \Return $\pi(\mathbf{X}) = \{\sigma'(\mathbf{X})_{1},\sigma'(\mathbf{X})_{2},\ldots,\sigma'(\mathbf{X})_{k}\}$
\end{algorithmic}
\label{algo_1}
\end{algorithm}

To bound the performance of our sample splitting algorithm, we compare against an optimal policy which selects the optimal number of arms, $k^{*}$. 
We lower bound the performance of our sample splitting algorithm against such an optimal algorithm: 

\begin{restatable}{proposition}{expectedvaluecertificate}
\label{thm:expectedvaluecertificate}
Let $\bm{\mu}$ be such that assumption \ref{assumption_1} holds. Let $\sigma$ be the permutation of the indexes obtained from sorting the empirical means obtained on the first stage on descending order. Let $c(i) = \sqrt{2log(\frac{1}{\delta})\frac{i}{s_2}}$. Then, conditioned on $\bm{X}$ for any top-k policy $\pi^k$ obtain as an output of algorithm \ref{algo_1}, the expected value of its certificate $f(\pi^k)$  is bounded below by 
\small
\begin{align}
    f(\pi^*) 
    &- \sum_{i = 1}^{k^*} 
    \exp\left( \frac{ -\left[ \Delta_{k^* i} - \left( c(k^*) - c(i) \right) \right]^2 s_1 }{ n } \right ) \nonumber \\
    &\quad \times \left[ \Delta_{k^* i} - \left( c(k^*) - c(i) \right) \right] \nonumber \\
    &- \sum_{i = k^*+1}^n 
    \exp\left( \frac{ -\Delta_{i k^*}^2 s_1 }{ n } \right )\left[ \Delta_{k^* i} - \left( c(k^*) - c(i) \right) \right]
\end{align}

\end{restatable}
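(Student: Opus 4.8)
The plan is to bound the shortfall $f(\pi^*) - f(\pi^k)$ by a sum of ``mistake'' contributions, each of the form (probability of a first-stage ordering error) times (value lost when that error occurs). Since each stage allocates arms uniformly, the final selection ordering $\sigma'$ is built from all $s_1$ first-stage samples, giving roughly $s_1/n$ samples per arm; Hoeffding's inequality then controls each deviation of $\bar X_i$ from $\mu_i$ at the rate $\exp(-2(s_1/n)t^2)$, which is the source of every exponential factor in the statement. First I would invoke Theorem~\ref{thm:topk} to restrict attention to top-$k$ policies, so that the only free parameter is the count $k$ and $\pi^*$ is the top-$k^*$ policy. Conditioning on $\mathbf X$ (hence on $\sigma$), the second-stage randomness enters only through $\bar Y$; since $\E[\bar Y_i]=\mu_i$, Jensen's inequality gives $\E[\max_j \bar Y_{\sigma_j}]\ge \max_j \mu_{\sigma_j}$, so for a lower bound I may replace each expected empirical maximum by the corresponding true-mean maximum, and Lemmas~\ref{thm:stochasticdomination} and~\ref{thm:top_k_dominance} guarantee that the ordering defining the certificate respects the true-mean ordering in the stochastic sense.

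With the problem reduced to means and counts, the certificate obtained by committing to $i$ arms is governed by the best true mean captured among the top-$i$ empirical arms minus the penalty $c(i)$, and $f(\pi^k)$ can fall below $f(\pi^*)$ through exactly two mechanisms, which produce the two sums. For $i \le k^*$, a high-mean arm that the optimal policy relies on is pushed out of the decision-relevant positions, so that the certificate falls short of the $k^*$ certificate by the effective gap $\Delta_{k^* i} - (c(k^*)-c(i))$; the first-stage fluctuation needed to trigger this is precisely that gap, yielding the factor $\exp(-[\Delta_{k^* i}-(c(k^*)-c(i))]^2 s_1/n)$ weighted by the same gap. For $i > k^*$, a low-mean arm intrudes into the top positions and degrades either the penalty or the selected maximum, and the triggering fluctuation is now the raw gap $\Delta_{i k^*}$, giving $\exp(-\Delta_{ik^*}^2 s_1/n)$ times the value loss. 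Summing these per-arm products over the two index ranges and subtracting from $f(\pi^*)$ gives the claimed bound.

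The main obstacle will be making the coupling between the combinatorial selection event and the continuous certificate loss rigorous: I must show that whenever the data-driven choice underperforms, the shortfall is dominated by the bracketed effective gap of a single responsible arm, so that a union bound over arms (rather than over subsets) suffices and no cross-terms appear. This requires invoking Lemma~\ref{thm:stochasticdomination} so that displacing a higher-mean arm by a lower-mean one can only decrease the objective, and verifying that the noise threshold is exactly $\Delta_{k^* i}-(c(k^*)-c(i))$ (respectively $\Delta_{i k^*}$) rather than a looser quantity. A secondary subtlety is that Algorithm~\ref{algo_1} chooses $\hat k$ on the split halves but forms the final set using the ordering $\sigma'$ computed on all $s_1$ samples; it is this combined ordering that yields the $s_1/n$ rate, so I would apply concentration to $\sigma'$ while treating the validation-based choice of $\hat k$ as a selection among the candidate certificates and absorbing the split into the constants of $c(\cdot)$.
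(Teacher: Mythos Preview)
Your high-level plan---reduce to top-$k$ policies via Theorem~\ref{thm:topk}, replace second-stage empirical maxima by true means, and control the shortfall as a sum of (probability of a first-stage deviation)$\times$(certificate loss)---matches the paper's. The two sums and the exponential rates you write down are exactly the ones that appear. Where your proposal diverges is in \emph{what the index $i$ represents and which random quantity you concentrate}.

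In the paper's argument, $i$ is not an arm that gets displaced; it is a \emph{candidate value of the count $\hat k$} returned by Algorithm~\ref{algo_1}. Conditioning on the ordering $\sigma$, the only remaining source of suboptimality is that the algorithm selects $\hat k=i\neq k^*$. The proof splits into the two cases $i<k^*$ and $i>k^*$. The selection rule in the algorithm says $\hat k=i$ only if $\overline V_{\sigma_i}-c(i)\ge \overline V_{\sigma_{k^*}}-c(k^*)$, so the paper bounds $P(\hat k=i)$ by Hoeffding applied to the \emph{validation-half means} $\overline V$, whose expectation gap is $\Delta_{k^*i}$. For $i<k^*$ this yields the threshold $\Delta_{k^*i}-(c(k^*)-c(i))$ directly; for $i>k^*$ one first observes $\overline V_{\sigma_{k^*}}-\overline V_{\sigma_i}<c(k^*)-c(i)<0$ and then bounds the cruder event $\overline V_{\sigma_{k^*}}<\overline V_{\sigma_i}$, giving the raw gap $\Delta_{ik^*}$ in the exponent. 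The loss factor is just the difference of the two certificates, which by optimality of $k^*$ is exactly $\Delta_{k^*i}-(c(k^*)-c(i))$ in both cases.

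Your narrative instead attributes the loss to an arm being ``pushed out of'' or ``intruding into'' the top positions of $\sigma'$, and you propose to concentrate the full-data ordering while ``absorbing'' the validation choice into constants. That is the wrong event: once you condition on the first-stage ordering, there is no ordering error left to bound---the entire discrepancy comes from picking the wrong $k$, and the natural witness for that event is the validation comparison $\overline V_{\sigma_i}-c(i)$ versus $\overline V_{\sigma_{k^*}}-c(k^*)$. If you rewrite your two ``mechanisms'' as the two cases $\hat k=i<k^*$ and $\hat k=i>k^*$ and apply Hoeffding to $\overline V$ rather than to $\sigma'$, the rest of your outline goes through essentially verbatim and coincides with the paper's proof.
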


Where $\Delta_{ij} = \mu_{\sigma_i} - \mu_{\sigma_j}$.  This result is very similar to the classical simple regret guarantees \cite{bubeck2011pure}, except gaps between means are replaced with gaps between lower bounds. Notably, although the bound depends on $k^*$ our algorithm (and thus the guarantee) does not, making our procedure adaptive. In particular, if $k^* = n$, the bound will be identical to the one obtained from a single stage approach. Notably, as a consequence of the gaps being part of the bound, the tension between $s_1$ and $s_2$ becomes explicit; higher $s_1$ leads to exponential improvements on the bounds at the cost of the loosening it by $\frac{1}{\sqrt{s_2}}$.

\subsection{Incorporating Priors}


So far, we have developed algorithms which are prior-free: they do not rely on any information in advance about $\mu$ and aim to approximate the optimal policy purely using information gathered during the trial. However, in many practical settings, the experimenter may have an informative prior over $\bm{\mu}$ given previous work in the domain. For example, many meta-analyses have quantified the distribution of reported effect sizes for interventions in domains such as education~\citep{education_effect}, medicine~\citep{real_world_study}, development~\citep{economic_effect}, and more. Although the effect size for new interventions under consideration are unknown, the experimenter may be able to improve their design by modeling them as drawn from such a prior distribution.  

Formally, we assume access to a joint prior, $\mathcal{P}$, so that the mean of each arm is distributed according to $\mu_{1},\mu_{2},\ldots,\mu_{n} \sim \mathcal{P}$. Our goal then, is to find a policy which maximizes the previously defined certificate when arms are distributed according to the prior, that is finding the policy $\pi^*$ that maximizes:
\begin{equation} 
\small
f(\pi) = \mathbb{E}_{\bm{\mu} \sim \mathcal{P}}\left[\mathbb{E}_{\mathbf{X}, \mathbf{Y} \sim \bm{\mu}} \left[\max\limits_{i \in \pi(\mathbf{X})} l_{i}(\mathbf{Y})\right]\right]
\end{equation}

To find the optimal policy, we note that $\pi^*$ can be computed for each $\mathbf{X}$ by optimizing over the posterior $\mathcal{P}(\bm{\mu} | \mathbf{X})$, in other words by maximizing,
\begin{equation}
    \small
   f_{\mathbf{X}}(\pi) = \mathbb{E}_{\bm{\mu} \sim \mathcal{P}(\bm{\mu} | \mathbf{X})} \left[\mathbb{E}_{\mathbf{Y} \sim \bm{\mu}}\left[\max\limits_{i \in \pi(\mathbf{X})} l_{i}(\mathbf{Y}) | \mathbf{X} \right]\right] 
\end{equation}
for all $\mathbf{X}$. Importantly, this certificate $l$ still has the same frequentist coverage guarantees; the Bayesian prior is used only to improve the power of the design.

We develop a prior-based algorithm to greedily compute $\pi(\mathbf{X})$, assuming sampling access to the posterior distribution. 
The sampling can be implemented through many different Bayesian inference approaches; in this paper, we ignore the specifics, and let this be a black box. The procedure begins by sampling $d$ values of $\bm{\mu}$ from the posterior. Then, for every set $B^1 \subset [n]$ of size 1, using the sample-average certificate obtained from the posterior draw, we estimate the value of $\mathbb{E}_{\mathbf{Y} \sim \bm{\mu}}\left[\max\limits_{i \in B^1} l_{i}(\mathbf{Y})\right]$. 
This process is repeated for all possible sets of size two, $B^1 \cup \{i\}$ for $i \in [n] \setminus B^1$, to greedily construct $B^2$, a candidate set of size two. 
The algorithm proceeds iteratively, adding elements to $B^i$ until $|B^i| = n$. 
Finally, the resulting policy $\pi(\mathbf{X})$, is defined as the set $B^k$ with the highest estimate out of $B^1...B^n$. Consitent with the previous sections let $B^k = \pi(\mathbf{X})$

Through reduction to submodular optimization, we demonstrate approximation guarantees for our algorithm. 
Our algorithm relaxes the assumption of stochastic domination needed previously (In particular our derived policy is not a top-k policy). Formally,


\begin{restatable}{theorem}{lowerboundprior}
\label{thm:lowerboundprior}
    Let $\hat{\pi}$ be the policy obtained by our proposed algorithm using $d$ samples from the posterior $\mathcal{P}(\bm{\mu} | \mathbf{X})$. Then  $f(\hat{\pi}) \geq f(\pi^*) (1-1/e-\epsilon)$, where $\epsilon = O\left(\sqrt{\log\left( \frac{1}{\delta}\right){d}} \right)$.
\end{restatable}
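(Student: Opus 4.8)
The plan is to condition on the first-stage data $\mathbf{X}$ (so that we work with the fixed posterior $\mathcal{P}(\bm{\mu}\mid\mathbf{X})$) and recast the selection of $\pi(\mathbf{X})$ as the maximization of a set function over subsets of $[n]$. Writing $c(k)=\sqrt{\frac{k}{2s_2}\log(2/\delta)}$ for the cardinality penalty, the exact objective becomes
\begin{equation*}
F(B)=\mathbb{E}_{\bm{\mu}\sim\mathcal{P}(\bm{\mu}\mid\mathbf{X})}\,\mathbb{E}_{\mathbf{Y}\sim\bm{\mu}}\Big[\max_{i\in B}\overline{Y}_i\Big]-c(|B|)=:h(B)-c(|B|),
\end{equation*}
since subtracting $c(|B|)$ from every arm only shifts the maximum. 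The first step is to show that $h$ is monotone and submodular. For any fixed realization of $\mathbf{Y}$ the map $B\mapsto\max_{i\in B}\overline{Y}_i$ is monotone and submodular, because the marginal gain of adding an arm $j$ to $B$ is $\max\{0,\overline{Y}_j-\max_{i\in B}\overline{Y}_i\}$, which is nonincreasing as $B$ grows; taking expectations over $\mathbf{Y}$ and $\bm{\mu}$ preserves both properties.

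Next I would exploit the fact that within a fixed cardinality the penalty $c(|B|)$ is constant, so the greedy element chosen at each step of the prior variant of Algorithm~\ref{algo_1} is exactly the greedy step for the monotone submodular $h$; the penalty only matters when we finally compare the chain elements $B^1,\dots,B^n$. Hence the classical greedy guarantee for monotone submodular maximization applies level by level: if $k^*=|\pi^*(\mathbf{X})|$, then $h(B^{k^*})\ge(1-1/e)\max_{|B|=k^*}h(B)\ge(1-1/e)\,h(\pi^*(\mathbf{X}))$. Subtracting $c(k^*)$ and using that the algorithm returns $\argmax_k F(B^k)$ then yields $F(\hat{\pi})\ge F(B^{k^*})\ge(1-1/e)F(\pi^*)-\tfrac{1}{e}c(k^*)$, so the penalty contributes only the lower-order shift $\tfrac{1}{e}c(k^*)$, which I would fold into the stated slack.

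The remaining piece is to account for the fact that we never see $F$ exactly but only its $d$-sample posterior estimate $\widehat{F}(B)=\frac{1}{d}\sum_{t}\max_{i\in B}l_i^{(t)}$. Because rewards are bounded, each $\widehat{F}(B)$ is an average of $d$ i.i.d.\ bounded summands, so Hoeffding gives $|\widehat{F}(B)-F(B)|\le\xi$ with $\xi=O\!\big(\sqrt{\log(1/\delta)/d}\big)$ after a union bound over the $O(n^2)$ sets the greedy ever inspects. I would then run the standard approximate-value-oracle version of the greedy analysis: a per-step additive error of $\xi$ degrades the guarantee to $F(\hat{\pi})\ge(1-1/e)F(\pi^*)-O(k^*\xi)$, and normalizing by $F(\pi^*)$ rewrites this in the multiplicative form $(1-1/e-\epsilon)$ with $\epsilon=O\!\big(\sqrt{\log(1/\delta)/d}\big)$.

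The main obstacle I anticipate is the cardinality-dependent penalty $c(|B|)$, which makes the full objective $F$ neither monotone nor submodular, so the textbook $(1-1/e)$ bound cannot be invoked on $F$ directly; the argument has to be organized so that submodularity is used only on the penalty-free $h$ at each fixed level while $c(|B|)$ is controlled separately. The secondary technical point is propagating the uniform oracle error through the greedy recursion cleanly enough to obtain a single multiplicative slack $\epsilon$ rather than an additive term, which requires a lower bound on $F(\pi^*)$ and enough care in the union bound that the dependence on $n$ is absorbed into the stated constant.
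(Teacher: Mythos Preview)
Your proposal is correct and follows essentially the same route as the paper: show that the penalty-free part $h(B)=\mathbb{E}[\max_{i\in B}\overline{Y}_i]$ is monotone submodular, observe that the cardinality penalty $c(|B|)$ is constant at each level so the greedy chain $B^1\subset\cdots\subset B^n$ coincides with the standard greedy for $h$, invoke the $(1-1/e)$ guarantee at level $k^*$, and absorb the $d$-sample estimation error via Hoeffding. If anything you are more careful than the paper's proof, which does not explicitly isolate the residual $\tfrac{1}{e}c(k^*)$ term or discuss the additive-to-multiplicative conversion you flag; the paper simply asserts the $(1-1/e-\epsilon)$ factor after brute-forcing over cardinalities.
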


The idea behind the proof is to leverage the property that greedy algorithms are $1-1/e$ optimal for monotonic submodular optimization, and we demonstrate that our situation does in fact match this type of problem. Furthermore, we demonstrate a hardness result, showing that we cannot beat the $1-1/e$ performance achieved by our posterior sampling algorithm \ref{thm:upperboundprior} the proof of this result can be found on the appendix.

\section{Experiments}
\label{sec:experiments}
We assess our two-stage RCT design with both synthetic and real-world datasets.

\begin{figure*}[ht!]
\centering 
\includegraphics[width=\linewidth]{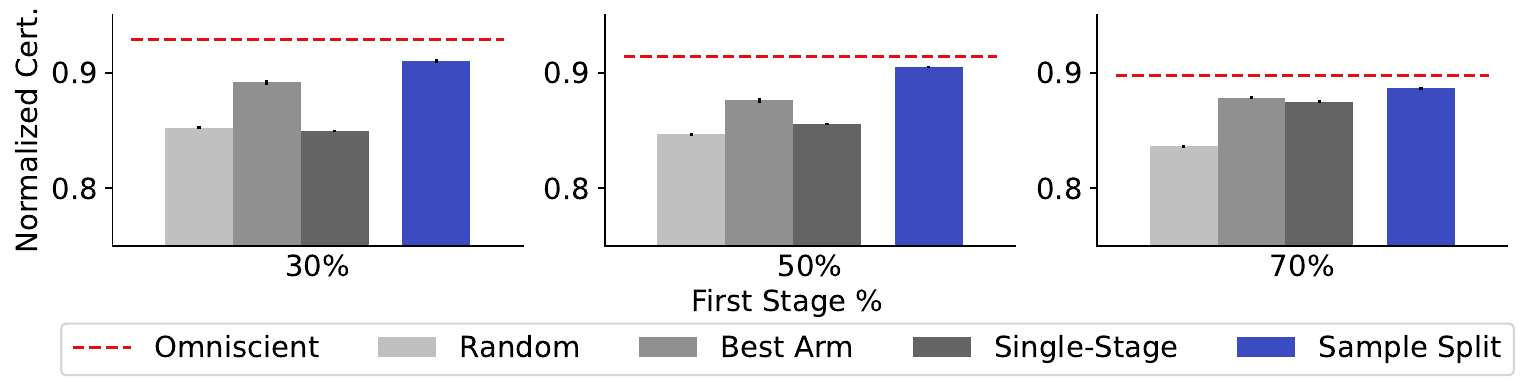}
\caption{Our sample splitting algorithm outperforms all baselines across first stage sizes. The largest improvement occurs when the first stage is small, as this leaves budget for the second stage to compute certificates. }
\label{fig:non_adaptive}
\end{figure*}

\paragraph{Synthetic Dataset and Setup} 
We construct a synthetic dataset to evaluate our two-stage RCT designs. 
We sample arm means, $\bm{\mu}$, from a uniform 0-1 distribution (we experiment with other choices  in Appendix~\ref{sec:uniform}). 
Arms have Bernoulli outcomes with mean $\mu_{i}$, which simulates settings where treatment are successful with probability $\mu_{i}$. 
We fix $n=10$ (we find similar results for other $n$ in Appendix~\ref{sec:n_arms}) and $\delta=0.1$ (and find similar results for other $\delta$ in Appendix~\ref{sec:delta}). 
We compare the following RCT designs
\begin{enumerate}
    \item \textbf{Random} - Two-stage top-k method + random $k$
    \item \textbf{Best Arm} - Two-stage method with $k=1$
    \item \textbf{single-stage} - the basic RCT which uniformly randomizes the entire budget $T$ over the arms, without using a second stage to eliminate some.
    \item \textbf{Sample Split} - Our proposed two-stage method which uses the first stage to prune arms and the second stage to compute certificates
    \item \textbf{Omniscient} - We construct an omniscient top-k policy. It consists on delivering the top $k^*$ arms from the first stage where $k^*$ is optimal. In other words, $k^* = \argmax_{i \in \pi(\bf{X})} \mu_i - \sqrt{log(\frac{1}{\delta})\frac{i}{2 s_2}}$.
\end{enumerate}

We compare designs by measuring normalized certificates, which is the ratio of $l$ to $\max\limits_{i} \mu_{i}$. 
We average results over $15$ seeds and $100$ runs per seed; seeds sample values of $\bm{\mu}$, while runs sample values for $\mathbf{X}$ and $\mathbf{Y}$. 

\paragraph{Comparison against Single-Stage Designs}

We compare our two-stage design against baselines, and find that our sample splitting methods improve upon baselines. 
In Figure~\ref{fig:non_adaptive}, we find that our sample splitting methods outperform single-stage methods across first stage percentages. 
When $s_{1}$ is 30\% of the budget, sample splitting methods outperform single-stage methods by 7\%, while when $s_{1}$ is 70\% of the budget, sample splitting outperform single-stage methods by only 1\%. 
With $s_{1}$ is large, $s_{2}$ is smaller, and so sampling splitting algorithms have less data to use for calculating certificates. 
However, when $s_{1}$ and $s_{2}$ are balanced, we can prune many arms in the first stage, while leaving sufficient time to find certificate values.

\paragraph{Impact of Budget}

\begin{figure}
    \includegraphics[width=\linewidth]{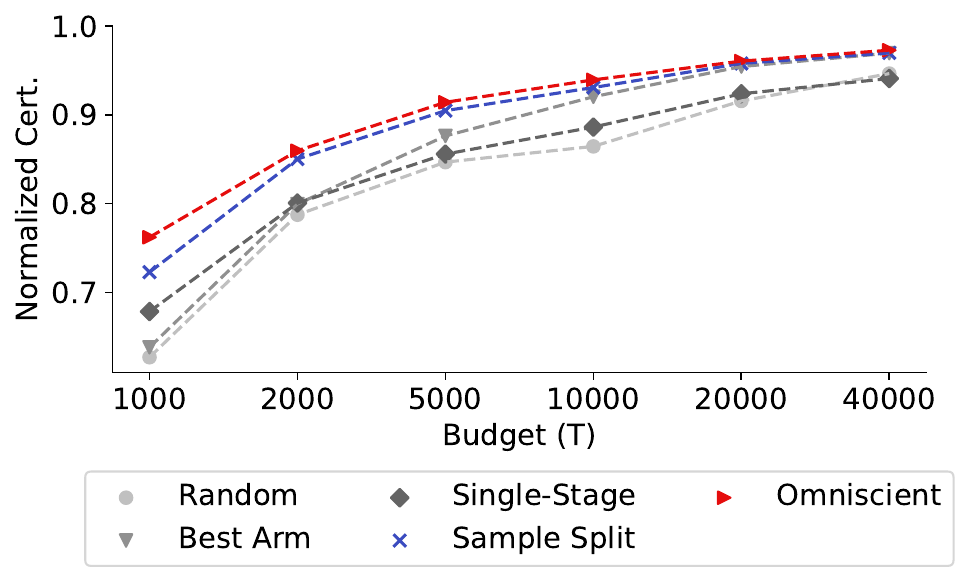}
    \caption{Sample split algorithms perform well for all values of $T$. When $T$ is large, the sample split approaches the optimal two-stage policy (omniscient). }
    \label{fig:vary_budget}
\end{figure}

To understand our designs across choices of $s_{1}$ and $T$, we compare policy performance, both when a) letting $s_{1} = s_{2}$, while varying $T$, and b) fixing $T$ while varying $s_{1}$. 

In Figure~\ref{fig:vary_budget}, we find that sample splitting policies are significantly better than all baselines when $T \leq 10000$ ($p<10^{-11}$). 
While best arm policies are 3\% better than sample splitting policies for $T=40000$, best arm policies are 13\% worse than sample splitting for $T=1000$. 
When comparing against the omniscient certificate, we see that sample splitting policies approach the omniscient policy for large budgets, as they are within 0.5\% for $T=40000$. 

\begin{figure}
\centering 
\includegraphics[width=\linewidth]{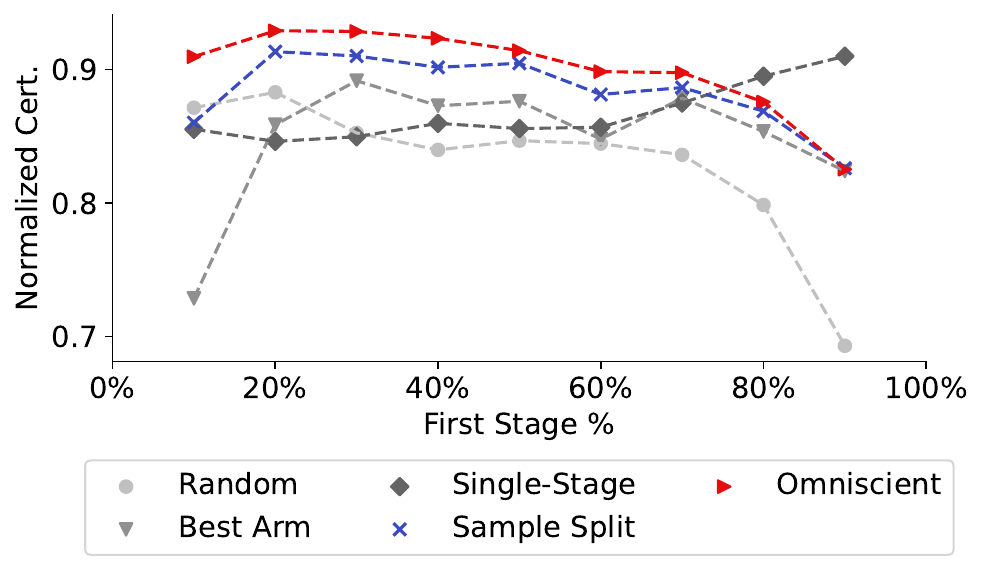}
\caption{Single-stage methods perform best when between 20\% and 70\% of the budget spent in the first stage, as this allows for arms to be pruned, and a certificate to be generated in the second stage. }
\label{fig:vary_stage_percent}
\end{figure}

\begin{figure*}[h]
\centering 
\includegraphics[width=\linewidth]{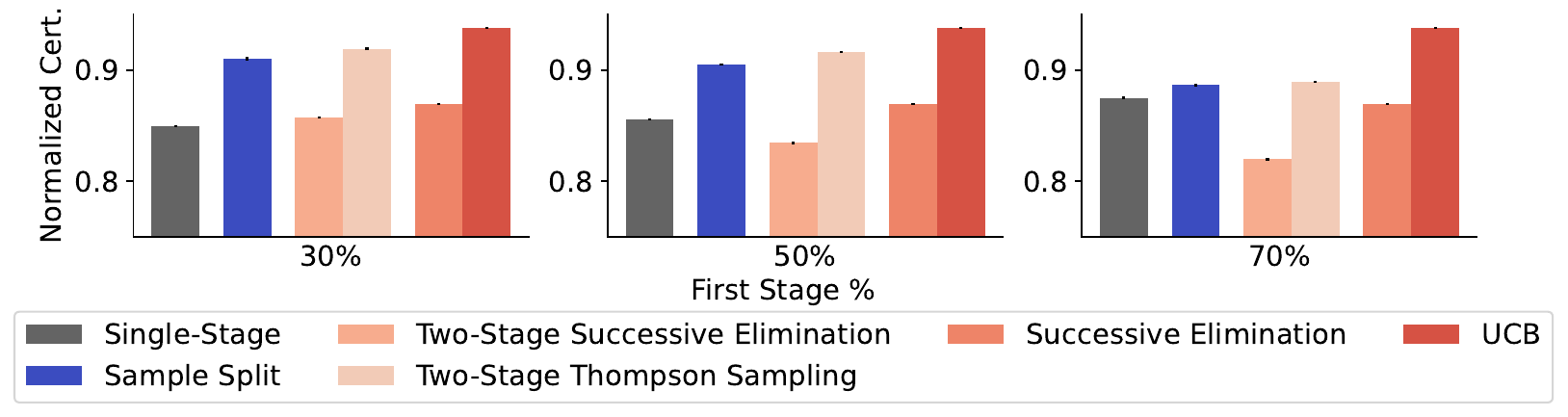}

\caption{Sample splitting policies can close the gap between single-stage and adaptive policies (such as UCB), and serves as a middle ground in performance and complexity. This is best seen when the first stage is 30\%, as sample split methods can capture up to 69\% of the improvement between single-stage and UCB designs. }
\label{fig:two_stage}
\end{figure*}

We compare our two-stage methods against baselines when varying the ratio of $s_{1}$ to $T$. 
In Figure~\ref{fig:vary_stage_percent}, we show that sample splitting excels with large first-stage sizes, whereas single-stage methods perform best when $\frac{s_{1}}{T} \geq 70\%$. 
When the first stage is much larger than the second stage, two-stage methods uses only data from the second stage, while single-stage methods use data from both stages.
This occurs due to the adaptivity from the second stage, which prevent us from applying Hoeffdings bound to data from both stages due to the lack of the iid property.

\paragraph{Comparison with Adaptive Policies}

We compare our approach against  adaptive designs which can potentially improve the guarantees at the cost of complexity. 
We detail the adaptive methods below: 
\begin{enumerate}    
    \item \textbf{Two-Stage Successive Elimination} - We perform successive elimination in-batch, by first doing uniform exploration in the first stage, then running successive elimination to prune arms~\citep{successive_elimination}. We then re-run uniform exploration in the second stage. 
    \item \textbf{Two-Stage Thompson Sampling} - 
    In the first stage, two-stage Thompson Sampling performs performs uniform exploration, and in the second stage, it performs non-uniform exploration in the second round based on probabilities from  Thompson sampling probabilities with a uniform prior~\citep{thompson_sampling}. 
    \item \textbf{Successive Elimination} - We run the successive elimination algorithm, with a budget of $T$. 
    \item \textbf{UCB} - We run the upper confidence bound (UCB) algorithm with budget $T$~\citep{auer2002finite}. Note that UCB has a much higher degree of adaptivity: it updates assignment probabilities $T-1$ times, compared to once for the two-stage policies.
\end{enumerate}

We compare our design to adaptive approaches in Figure~\ref{fig:two_stage}.
We find that non-uniform allocation probabilities have limited benefit: 
Sample splitting performs within 1.5\% of two-stage Thompson Sampling. Sample splitting also performs significantly better than both the two-stage and fully adaptive versions of Successive Elimination. 
UCB is the strongest fully adaptive algorithm and is able to perform better than less-adaptive designs. However, the best sample split design (where $s_1$ is 30\% of the budget) comes relatively close: Sample Splitting performs within 3\% of UCB, capturing up to 69\% of the improvement between the single-stage design and UCB. Practioners can capture much of the value of the most complex, highly adaptive design by using a properly configured two-stage, framework.   

\paragraph{Bayesian Setting} 
We explore whether knowledge of a prior distribution can improve the certificates discovered by two-stage policies. 
We compare our prior-based policy against both sample splitting and baseline policies on a synthetic dataset.
To construct such a dataset, we let $\bm{\mu}$ be distributed according to a $\beta$ distribution, fixing $\alpha=1$ and varying $\beta \in \{1,2,4\}$. Higher $\beta$ indicates a more informative prior. 

\begin{figure*}
\centering 
\includegraphics[width=\linewidth]{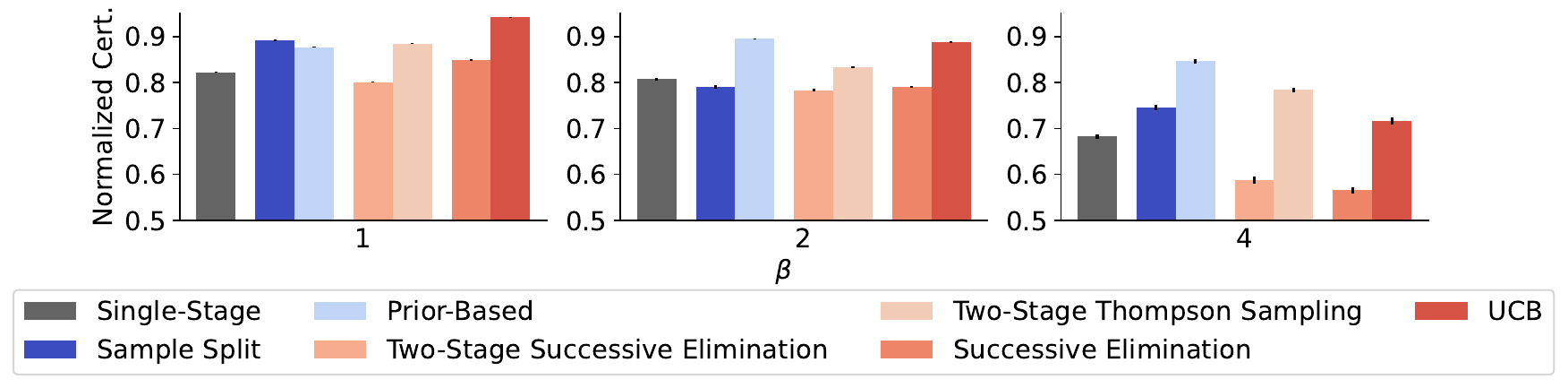}
\caption{When priors are informative, correlating to large $\beta$, prior-based methods can improve upon all policies, including adaptive policies such as UCB and two-stage Thompson Sampling. }
\label{fig:prior}
\end{figure*}

Figure~\ref{fig:prior} shows that informative priors (large $\beta$ allows prior-based methods to perform well, as they slightly exceed the performance of UCB at $\beta = 2$ and improve upon UCB by 18\% at $\beta = 4$. 
When available, informative priors of effect sizes contribute more than the ability to incorporate high degrees of adaptivity. 

\begin{figure}[h]
\centering 
\includegraphics[width=\linewidth]{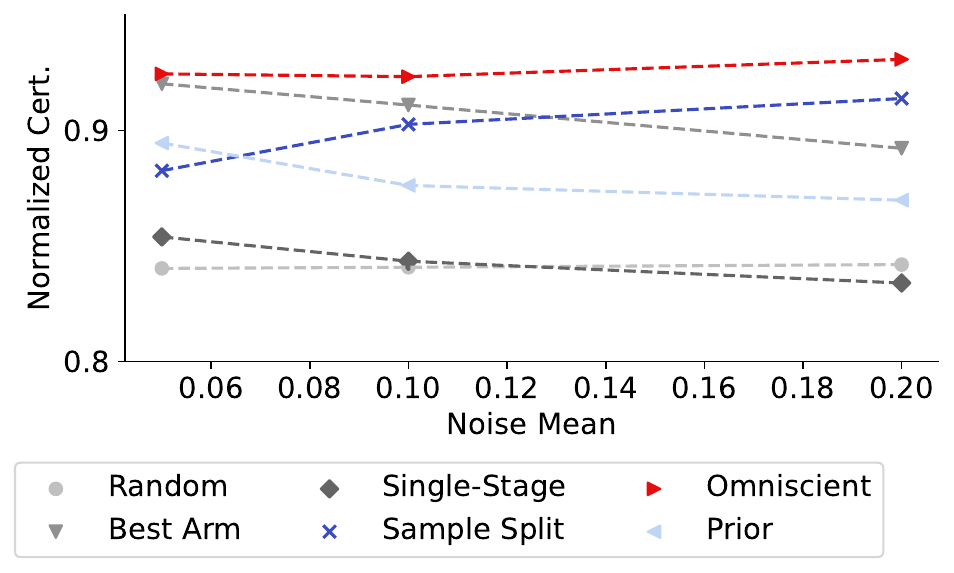}
\caption{Prior-based policies are sensitive to the mean of the noise; directional noise (corresponding to larger values for noise) leads to degrading performance, especially compared with sample splitting algorithms. }
\label{fig:misspecification}
\end{figure}

We compare the performance of our policies under prior misspecification by adding Gaussian noise to $\bm{\mu}$ with $\alpha=\beta=1$.  
We fix the noise variance to be $0.01$ and vary the mean in $\{0.05,0.1,0.2\}$. 
In Figure~\ref{fig:misspecification}, we see that when the prior is minimally misspecified, prior-based algorithms outperform sample split. 
However, prior-based algrithms fare poorly with increasing misspecification, demonstrating that prior-based are not robust to large degrees of misspecifcation. 

\paragraph{Real-World Experiments}
\label{sec:real_world}
\begin{figure}[h]
\centering 
\includegraphics[width=\linewidth]{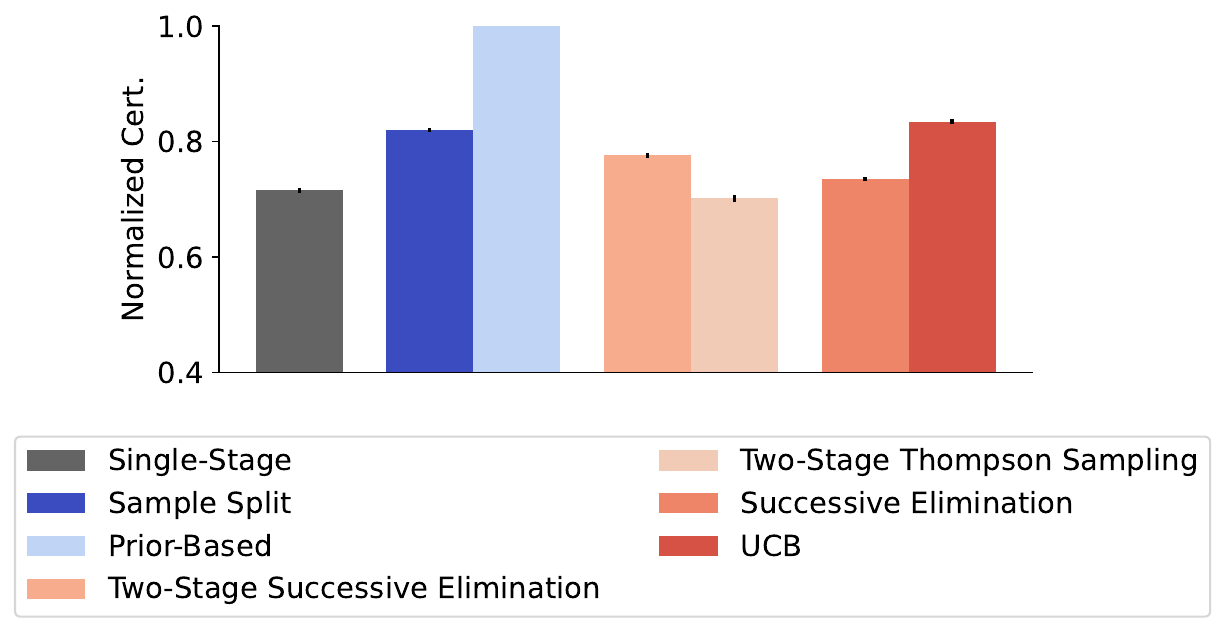}
\caption{On a real-world genertology dataset, we see that using prior-based methods improves the certificates generated by RCTs, even compared to adaptive methods. This reflects the added benefit we get when using domain knowledge for real-world RCTs.}
\label{fig:real_world}
\end{figure}

We run semi-synthetic experiments where effect sizes are drawn accordingly to a real-world distribution drawn from a meta-analysis of treatments in gerontology~\citep{real_world_study}. We retrieve 75 effect sizes fromthe meta-analysis and set the prior on $\bm{\mu}$ to be uniform over these values. 
Since effect sizes are reported as Cohen's d (a standardized metric), we model the outcome distribution $D_{\mu_i}$ as a normal with mean $\mu_i$ and standard deviation 1. 
For the certificate $\ell$, we use the 
corresponding tail bound for Sub-Gaussian variables: 

\begin{equation}    
\small
    P\left(|\bar{Y}_{i} - \mu_{i}| \leq \sqrt{\frac{2k}{s_{2}} \log(\frac{2}{\delta})}\right) \geq 1-\delta 
\end{equation}

In Figure~\ref{fig:real_world}, we see that prior-based methods perform best, beating even adaptive methods like UCB by 23\%. 
Restricting our attention to prior-free algorithms, sample split performs within 2\% of UCB. 
This verifies that the main conclusions from the synthetic experiments continue to hold on a real-world distribution: two-stage designs based on our sample split procedure can nearly match the performance of fully adaptive designs, and with access to the prior, Bayesian two-stage designs perform significantly better. 

\section{Conclusion and Limitations}
Traditional single-stage RCTs spend unnecessary resources exploring sub-optimal arms, while fully-adaptive procedures are often costly. 
We study two-stage RCTs, which improves on the guarantees from single-stage RCTs, while deliberately maintaining simplicity. 
We develop a top-k algorithm for designing such RCTs, and demonstrate the optimality of top-k two-stage designs when distributions posses a stochastic dominance ordering. 
We empirically demonstrate that our two-stage RCT can significantly improve guarantees compared with single-stage RCTs, and can even outperform adaptive methods in Bayesian settings. 
By using two-stage RCTs, real-world studies can improve guarantees without increasing complexity. 

Real-world constraints should inform the design of our two-stage RCTs. 
For example, our work incorporated domain knowledge through the use of priors, and future work could look into other ways that domain knowledge could be incorporated. 
Better understanding the level of adaptivity available can also instruct us how multi-stage RCTs can be better designed. 
For example, with a larger adaptivity budget can allow us to construct three or four-stage RCTs, which could potentially improve the guarantees delivered by our algorithms. 
Our work demonstrates the added benefits when introducing two-stage RCTs, and these methods can be customized depending on the situation.

\newpage 
\bibliographystyle{apalike}
\bibliography{ref}

\begin{thebibliography}{25}
\providecommand{\natexlab}[1]{#1}
\providecommand{\url}[1]{\texttt{#1}}
\expandafter\ifx\csname urlstyle\endcsname\relax
  \providecommand{\doi}[1]{doi: #1}\else
  \providecommand{\doi}{doi: \begingroup \urlstyle{rm}\Url}\fi

\bibitem[Hariton and Locascio(2018)]{rct}
Eduardo Hariton and Joseph~J Locascio.
\newblock Randomised controlled trials—the gold standard for effectiveness research.
\newblock \emph{BJOG: an international journal of obstetrics and gynaecology}, 125\penalty0 (13):\penalty0 1716, 2018.

\bibitem[Jamieson and Nowak(2014)]{jamieson2014best}
Kevin Jamieson and Robert Nowak.
\newblock Best-arm identification algorithms for multi-armed bandits in the fixed confidence setting.
\newblock In \emph{2014 48th annual conference on information sciences and systems (CISS)}, pages 1--6. IEEE, 2014.

\bibitem[Auer(2002{\natexlab{a}})]{auer2002using}
Peter Auer.
\newblock Using confidence bounds for exploitation-exploration trade-offs.
\newblock \emph{Journal of Machine Learning Research}, 3\penalty0 (Nov):\penalty0 397--422, 2002{\natexlab{a}}.

\bibitem[Bubeck et~al.(2013)Bubeck, Wang, and Viswanathan]{bubeck2013multiple}
S{\'e}ebastian Bubeck, Tengyao Wang, and Nitin Viswanathan.
\newblock Multiple identifications in multi-armed bandits.
\newblock In \emph{International Conference on Machine Learning}, pages 258--265. PMLR, 2013.

\bibitem[Mason et~al.(2020)Mason, Jain, Tripathy, and Nowak]{mason2020finding}
Blake Mason, Lalit Jain, Ardhendu Tripathy, and Robert Nowak.
\newblock Finding all good arms in stochastic bandits.
\newblock \emph{Advances in Neural Information Processing Systems}, 33:\penalty0 20707--20718, 2020.

\bibitem[Katz-Samuels and Jamieson(2020)]{katz2020true}
Julian Katz-Samuels and Kevin Jamieson.
\newblock The true sample complexity of identifying good arms.
\newblock In \emph{International Conference on Artificial Intelligence and Statistics}, pages 1781--1791. PMLR, 2020.

\bibitem[Bubeck et~al.(2011)Bubeck, Munos, and Stoltz]{bubeck2011pure}
S{\'e}bastien Bubeck, R{\'e}mi Munos, and Gilles Stoltz.
\newblock Pure exploration in finitely-armed and continuous-armed bandits.
\newblock \emph{Theoretical Computer Science}, 412\penalty0 (19):\penalty0 1832--1852, 2011.

\bibitem[Athey et~al.(2022)Athey, Byambadalai, Hadad, Krishnamurthy, Leung, and Williams]{athey2022contextual}
Susan Athey, Undral Byambadalai, Vitor Hadad, Sanath~Kumar Krishnamurthy, Weiwen Leung, and Joseph~Jay Williams.
\newblock Contextual bandits in a survey experiment on charitable giving: Within-experiment outcomes versus policy learning.
\newblock \emph{arXiv preprint arXiv:2211.12004}, 2022.

\bibitem[Li et~al.(2010)Li, Chu, Langford, and Schapire]{li2010contextual}
Lihong Li, Wei Chu, John Langford, and Robert~E Schapire.
\newblock A contextual-bandit approach to personalized news article recommendation.
\newblock In \emph{Proceedings of the 19th international conference on World wide web}, pages 661--670, 2010.

\bibitem[Bastani et~al.(2021)Bastani, Drakopoulos, Gupta, Vlachogiannis, Hadjichristodoulou, Lagiou, Magiorkinis, Paraskevis, and Tsiodras]{bastani2021efficient}
Hamsa Bastani, Kimon Drakopoulos, Vishal Gupta, Ioannis Vlachogiannis, Christos Hadjichristodoulou, Pagona Lagiou, Gkikas Magiorkinis, Dimitrios Paraskevis, and Sotirios Tsiodras.
\newblock Efficient and targeted covid-19 border testing via reinforcement learning.
\newblock \emph{Nature}, 599\penalty0 (7883):\penalty0 108--113, 2021.

\bibitem[Kasy and Sautmann(2021)]{kasy2021adaptive}
Maximilian Kasy and Anja Sautmann.
\newblock Adaptive treatment assignment in experiments for policy choice.
\newblock \emph{Econometrica}, 89\penalty0 (1):\penalty0 113--132, 2021.

\bibitem[Deshmukh et~al.(2018)Deshmukh, Sharma, Cutler, Moldwin, and Scott]{deshmukh2018simple}
Aniket~Anand Deshmukh, Srinagesh Sharma, James~W Cutler, Mark Moldwin, and Clayton Scott.
\newblock Simple regret minimization for contextual bandits.
\newblock \emph{arXiv preprint arXiv:1810.07371}, 2018.

\bibitem[Chambaz et~al.(2017)Chambaz, Zheng, and van~der Laan]{chambaz2017targeted}
Antoine Chambaz, Wenjing Zheng, and Mark~J van~der Laan.
\newblock Targeted sequential design for targeted learning inference of the optimal treatment rule and its mean reward.
\newblock \emph{Annals of statistics}, 45\penalty0 (6):\penalty0 2537, 2017.

\bibitem[Simchi-Levi and Wang(2023)]{simchi2023multi}
David Simchi-Levi and Chonghuan Wang.
\newblock Multi-armed bandit experimental design: Online decision-making and adaptive inference.
\newblock In \emph{International Conference on Artificial Intelligence and Statistics}, pages 3086--3097. PMLR, 2023.

\bibitem[Fan and Glynn(2021)]{fan2021fragility}
Lin Fan and Peter~W Glynn.
\newblock The fragility of optimized bandit algorithms.
\newblock \emph{arXiv preprint arXiv:2109.13595}, 2021.

\bibitem[Waudby-Smith et~al.(2021)Waudby-Smith, Arbour, Sinha, Kennedy, and Ramdas]{waudby2021time}
Ian Waudby-Smith, David Arbour, Ritwik Sinha, Edward~H Kennedy, and Aaditya Ramdas.
\newblock Time-uniform central limit theory and asymptotic confidence sequences.
\newblock \emph{arXiv e-prints}, pages arXiv--2103, 2021.

\bibitem[Chen and Andrews(2023)]{chen2023optimal}
Jiafeng Chen and Isaiah Andrews.
\newblock Optimal conditional inference in adaptive experiments.
\newblock \emph{arXiv preprint arXiv:2309.12162}, 2023.

\bibitem[Evans and Yuan(2022)]{education_effect}
David~K Evans and Fei Yuan.
\newblock How big are effect sizes in international education studies?
\newblock \emph{Educational Evaluation and Policy Analysis}, 44\penalty0 (3):\penalty0 532--540, 2022.

\bibitem[Greising et~al.(2009)Greising, Baltgalvis, Lowe, and Warren]{real_world_study}
Sarah~M Greising, Kristen~A Baltgalvis, Dawn~A Lowe, and Gordon~L Warren.
\newblock Hormone therapy and skeletal muscle strength: a meta-analysis.
\newblock \emph{Journals of Gerontology Series A: Biomedical Sciences and Medical Sciences}, 64\penalty0 (10):\penalty0 1071--1081, 2009.

\bibitem[Iwasaki and Tokunaga(2014)]{economic_effect}
Ichiro Iwasaki and Masahiro Tokunaga.
\newblock Macroeconomic impacts of fdi in transition economies: A meta-analysis.
\newblock \emph{World Development}, 61:\penalty0 53--69, 2014.

\bibitem[Even-Dar et~al.(2006)Even-Dar, Mannor, Mansour, and Mahadevan]{successive_elimination}
Eyal Even-Dar, Shie Mannor, Yishay Mansour, and Sridhar Mahadevan.
\newblock Action elimination and stopping conditions for the multi-armed bandit and reinforcement learning problems.
\newblock \emph{Journal of machine learning research}, 7\penalty0 (6), 2006.

\bibitem[Russo et~al.(2018)Russo, Van~Roy, Kazerouni, Osband, Wen, et~al.]{thompson_sampling}
Daniel~J Russo, Benjamin Van~Roy, Abbas Kazerouni, Ian Osband, Zheng Wen, et~al.
\newblock A tutorial on thompson sampling.
\newblock \emph{Foundations and Trends{\textregistered} in Machine Learning}, 11\penalty0 (1):\penalty0 1--96, 2018.

\bibitem[Auer(2002{\natexlab{b}})]{auer2002finite}
P~Auer.
\newblock Finite-time analysis of the multiarmed bandit problem, 2002{\natexlab{b}}.

\bibitem[Nemhauser et~al.(1978)Nemhauser, Wolsey, and Fisher]{submodular_optimization}
George~L Nemhauser, Laurence~A Wolsey, and Marshall~L Fisher.
\newblock An analysis of approximations for maximizing submodular set functions—i.
\newblock \emph{Mathematical programming}, 14:\penalty0 265--294, 1978.

\bibitem[Feige(1998)]{set_cover}
Uriel Feige.
\newblock A threshold of ln n for approximating set cover.
\newblock \emph{Journal of the ACM (JACM)}, 45\penalty0 (4):\penalty0 634--652, 1998.

\end{thebibliography}
\newpage 
\appendix 
\section*{Checklist}

 \begin{enumerate}
 \item For all models and algorithms presented, check if you include:
 \begin{enumerate}
   \item A clear description of the mathematical setting, assumptions, algorithm, and/or model. [Yes]
   \item An analysis of the properties and complexity (time, space, sample size) of any algorithm. [Yes]
   \item (Optional) Anonymized source code, with specification of all dependencies, including external libraries. [Yes]
 \end{enumerate}

 \item For any theoretical claim, check if you include:
 \begin{enumerate}
   \item Statements of the full set of assumptions of all theoretical results. [Yes]
   \item Complete proofs of all theoretical results. [Yes]
   \item Clear explanations of any assumptions. [Yes]     
 \end{enumerate}

 \item For all figures and tables that present empirical results, check if you include:
 \begin{enumerate}
   \item The code, data, and instructions needed to reproduce the main experimental results (either in the supplemental material or as a URL). [Yes]
   \item All the training details (e.g., data splits, hyperparameters, how they were chosen). [Yes/No/Not Applicable]
         \item A clear definition of the specific measure or statistics and error bars (e.g., with respect to the random seed after running experiments multiple times). [Yes]
         \item A description of the computing infrastructure used. (e.g., type of GPUs, internal cluster, or cloud provider). [Not Applicable]
 \end{enumerate}

 \item If you are using existing assets (e.g., code, data, models) or curating/releasing new assets, check if you include:
 \begin{enumerate}
   \item Citations of the creator If your work uses existing assets. [Yes]
   \item The license information of the assets, if applicable. [Yes]
   \item New assets either in the supplemental material or as a URL, if applicable. [Yes]
   \item Information about consent from data providers/curators. [Not Applicable]
   \item Discussion of sensible content if applicable, e.g., personally identifiable information or offensive content. [Not Applicable]
 \end{enumerate}

 \item If you used crowdsourcing or conducted research with human subjects, check if you include:
 \begin{enumerate}
   \item The full text of instructions given to participants and screenshots. [Not Applicable]
   \item Descriptions of potential participant risks, with links to Institutional Review Board (IRB) approvals if applicable. [Not Applicable]
   \item The estimated hourly wage paid to participants and the total amount spent on participant compensation. [Not Applicable]
 \end{enumerate}

 \end{enumerate}

\end{document}